\newcommand{\Ins}{\mathcal{I}}
\newcommand{\Top}{\mathcal{T}}
\newcommand{\eps}{\varepsilon}
\newcommand{\E}{\mathbb{E}}
\newcommand{\M}{\textbf{M}}
\newcommand{\rank}{\textrm{rank}}
\renewcommand{\log}{\lg}
\begin{document}

\begin{titlepage}

\title{Approximate Range Emptiness in Constant Time and Optimal Space}

\author{
  Mayank Goswami\thanks{Max-Planck Institute for Informatics.
Email: \texttt{gmayank@mpi-inf.mpg.de}.}
\and
 Allan Gr\o nlund\thanks{Aarhus University.
Email: \texttt{jallan@cs.au.dk}. Supported by MADALGO - Center for Massive Data
  Algorithmics, a Center of the Danish National Research Foundation.}
\and
  Kasper Green Larsen\thanks{Aarhus University.
Email: \texttt{larsen@cs.au.dk}. Supported by MADALGO - Center for Massive Data
  Algorithmics, a Center of the Danish National Research Foundation.}
  \and 
  Rasmus Pagh\thanks{IT University of Copenhagen.
Email: \texttt{pagh@itu.dk}. Supported by the Danish National Research
  Foundation under the Sapere Aude program.}
}

\date{}

\maketitle

\begin{abstract}
This paper studies the \emph{$\varepsilon$-approximate range emptiness} problem, where the task is to represent a set $S$ of $n$ points from $\{0,\ldots,U-1\}$ and answer emptiness queries of the form ``$[a ; b]\cap S \neq \emptyset$ ?'' with a probability of \emph{false positives} allowed.
This generalizes the functionality of \emph{Bloom filters} from single point queries to any interval length $L$.
Setting the false positive rate to $\varepsilon/L$ and performing $L$ queries, Bloom filters yield a solution to this problem with space $O(n \lg(L/\varepsilon))$ bits, false positive probability bounded by $\varepsilon$ for intervals of length up to $L$, using query time $O(L \lg(L/\varepsilon))$.
Our first contribution is to show that the space/error trade-off cannot be improved asymptotically: 
Any data structure for answering approximate range emptiness queries on intervals of length up to $L$ with false positive probability $\varepsilon$, must use space $\Omega(n \lg(L/\varepsilon)) - O(n)$ bits.
On the positive side we show that the query time can be improved greatly, to constant time, while matching our space lower bound up to a lower order additive term.
This result is achieved through a succinct data structure for (non-approximate 1d) range emptiness/reporting queries, which may be of independent interest.
\end{abstract}

\thispagestyle{empty}
\end{titlepage}
\newpage

\section{Introduction}
The approximate membership problem is a fundamental and well-studied
data structure problem. Here we are to represent a static set
$S$ of $n$ distinct points/elements from a universe
$[U]=\{0,\dots,U-1\}$. A query is specified by a point $x \in [U]$
and the data structure must answer \emph{member} with probability
$1$ if $x \in S$. If $x \notin S$, then the data structure must answer
\emph{not member} with probability at least $1-\eps$, where 
$\eps > 0$ is a parameter of the data structure and the
probability is over the randomness used when constructing the data structure. 
Hence the name
approximate membership. Note that solutions are inherently 
randomized as queries for points not in $S$ must be answered
correctly with a given probability. The value $\eps$ is typically
refered to as the \emph{false positive rate}.

While storing $S$ directly requires $\lg \binom{U}{n} \ge n\lg(U/n)$
bits, approximate membership data structures can be implemented to use
only $n\lg(1/\eps) + O(n)$ bits
, which is known to be
optimal~\cite{Carter}. Thus efficient approximate membership data structures
use less space than the information theoretic minimum for storing $S$
directly.

In this paper, we study a natural generalization of the approximate
membership problem, in which we are to handle query intervals rather than single-point
queries. More formally, we define the \emph{approximate range
  emptiness} problem as follows: Represent a set $S$ of $n$ points
from a universe $[U]$, such that given a query interval $I = [a ; b]$,
the data structure must answer \emph{non-empty} with probability $1$
if there is a point from $S$ in $I$ (i.e. ``is $S \cap I \neq \emptyset$?''). If
$I$ contains no points from $S$, the data structure must answer
\emph{empty} with probability at least $1-\eps$.

The approximate range emptiness problem was first considered in database applications~\cite{alexiou2013adaptive}. Here approximate range emptiness data structures were used to store a small (approximate) representation of a collection of points/records in main memory, while maintaining the actual (exact) points on disk. When processing range queries, the approximate representation is first queried to avoid expensive disk accesses in the case of an empty output. The paper takes a heuristic approach to the problem and design data structures that seem to perform well in practice. Unfortunately no formal analysis or worst case performance guarantees are provided. Motivated by this lack of theoretical understanding, we ask the following question:
\begin{question}
What theoretical guarantees can be provided for the approximate range emptiness problem?
\end{question}
Towards answering this question, observe that an approximate range emptiness structure solves the approximate membership problem and hence a space lower bound of $n\lg(1/\eps)$ bits follows directly from previous work. But can this space bound be realized? Or does an approximate range emptiness data structure require even more space? What if we require only that the data structure answers queries of length no more than a given input parameter $L$?  At least for $L \ll u/n$, we can beat the trivial $n \lg(U/n)$ space bound as follows: Implement the approximate membership data structure of~\cite{pagh2005optimal} on $S$ with false positive rate $\eps/L$. Upon receiving a query interval $I$ of length no more than $L$, query the approximate membership data structure for every $x \in I \cap [U]$ and return \emph{non-empty} if any of these queries return \emph{member}. Otherwise return \emph{empty}. By a union bound, this gives the desired false positive rate of $\eps$ and the space consumption is $n\lg(L/\eps)$ bits. How much can we improve over this trivial solution? A natural line of attack would be to design an approximate membership structure where the locations of the false positives are correlated and tend to ``cluster'' together inside a few short intervals. Is this possible without increasing the space usage? Answering these questions is the focus of this paper.

\paragraph{Our Results.}
In Section~\ref{sec:lower}, we answer the questions above in the strongest
possible negative sense, i.e. we prove that any data structure for the
approximate range emptiness problem, where only query intervals of length up
to a given $L$ are to answered, must use
$$
n\lg\left(\frac{L^{1-O(\eps)}}{\eps}\right) - O(n)
$$
bits of space. Thus it is not even possible to shave off a constant factor in
the space consumption of the trivial solution, at least for
$\eps=o(1)$. Moreover, the lower
bound applies even if only queries of length exactly $L$ are to be
answered. We find this extremely surprising (and
disappointing).

In light of the strong lower bounds, we set out to improve over the $\Omega(L)$ query time of the trivial solution above, while maintaining optimal space. In Section~\ref{sec:upper} we present a data structure with $O(1)$ query time and space $n \lg(L/\eps) +o(n \lg(L/\eps)) + O(n)$ bits, thus matching the lower bound up to a lower order term whenever $L/\eps = \omega(1)$. 
The data structure answers a query of any length $\ell \leq L$, not only those of length exactly $L$, with false positive rate $\varepsilon \ell/L$. 
As a building block in our data structure, we also design a new succinct data structure for (non-approximate) range emptiness queries which may be of independent interest. This data structure uses $n\lg(U/n)+O(n\lg^{\delta}(U/n))$ bits of space for storing $n$ points from a universe $[U]$, while answering queries in constant time. Here $\delta>0$ is an arbitrarily small constant. The data structure is thus optimal up to the lower order additive term in the space usage. Moreover, it also supports reporting all $k$ points inside a query interval in $O(k)$ time, thus providing a succinct 1d range reporting data structure. The best previous data structure was a non-succinct $O(n\lg U)$ bit data structure of Alstrup et al.~\cite{alstrup1d}.

As an additional result, we also prove in Section~\ref{sec:twosided} that
data structures with two-sided error $\eps>0$ (i.e., for non-empty
intervals, we must answer \emph{non-empty} with probability $1-\eps$), must
use space
$$
n \lg(L/\eps) - O(n)
$$
bits when $0 < \eps < 1/\lg U$. Thus for small error rate, only
lower order additive savings in space are possible. For $1/\lg U < \eps <
1/2-\Omega(1)$, we get a space lower bound of
$$
\Omega\left(\frac{n \lg(L \lg U)}{\lg_{1/\eps} \lg U}\right)
$$
bits, thus ruling out hope of large space savings even with two-sided
errors. Again, these lower bounds hold even if only query intervals of
length exactly $L$ are to be answered.

\paragraph{Related Work on Approximate Membership.} 
Bloom filters~\cite{bloom1970space} are the first approximate membership data structures, requiring $n\lg(1/\eps)\lg e$ space and having a lookup time of $\lg(1/\eps)$. They have found a variety of applications both in theory and practice, and we refer the reader to~\cite{broder2004network} for an overview of applications. In \cite{pagh2005optimal} the space usage was reduced to near-optimal $(1+o(1)) n \lg (1/\eps)$ bits and the lookup time to an optimal $O(1)$, while also supporting insertions/deletions in amortized expected $O(1)$ time. In the static case the space usage has been further reduced to $o(n)$ bits from the lower bound~\cite{belazzougui2013compressed}.

The dynamic case where the size $n$ of the set $S$ is not known in advance was handled in~\cite{pagh2013approximate}, where it was shown that the average number of bits per element in an optimal data structure must grow with $\lg \lg n$. A closely related problem to approximate membership is the retrieval problem, where each element has an associated data, and data must be retrieved correctly only for members. In~\cite{belazzougui2013compressed} the authors achieve a query time of $O(1)$ using a space that is within $o(n)$ of optimal. 
%

\section{Lower Bounds}
\label{sec:lower}
In this section, we prove a lower bound on the space needed for any
data structure answering approximate range emptiness queries. While
the upper bounds presented in Section~\ref{sec:upper} guarantees a
false positive rate of $\eps$ for any query interval of length up to a
predefined value $L$, the lower bound applies even if \emph{only}
length $L$ intervals are to be answered. More formally, we show:

\begin{theorem}
\label{thm:lower}
For the approximate range emptiness problem on $n$ points from a
universe $[U]$, any data structure answering all query intervals of a
fixed length $L \leq u/5n$ with false positive rate $\eps>0$, must use
at least
$$s \geq n \lg \left( \frac{L^{1-O(\eps)}}{\eps} \right)-O(n)$$
bits of space.
\end{theorem}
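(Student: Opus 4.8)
The plan is to use an information-theoretic / encoding argument: show that any approximate range emptiness data structure with false positive rate $\eps$ on length-$L$ intervals can be used to encode a "large" combinatorial object, so its size $s$ must be at least the log of the number of such objects. The object to encode will be a set $S$ of $n$ points together with enough side information to recover $S$ from the data structure's answers. Concretely, I would partition the universe $[U]$ into $n$ blocks, each of size about $u/n \geq 5L$, and place exactly one point of $S$ in each block; within a block there are $\Theta(u/n)$ choices, so a priori $\lg\binom{U}{n} \approx n\lg(u/n)$ points. Running the data structure on all length-$L$ intervals inside a block lets us narrow down where the point sits, but because of false positives the data structure may also report "non-empty" on intervals that are genuinely empty — so the answers alone do not pin down the point, and we must append a short "correction" that tells the decoder which of the reported-nonempty intervals actually contains the point.

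The key steps, in order: (1) Fix the block structure and restrict attention to the random choice of $S$ uniform over the $\prod$ (block size) configurations; think of $s$ as an upper bound on $H(\M)$, the entropy of the data structure $\M$. (2) Show $H(S) \le H(\M) + H(S \mid \M)$, so it suffices to bound $H(S\mid\M)$, the expected number of bits needed to recover $S$ given $\M$. (3) For a fixed $\M$, describe a decoding procedure: in each block, query a small set of $O((u/n)/L)$ disjoint (or overlapping at the right granularity) length-$L$ intervals tiling the block; every interval that truly contains the point is answered "non-empty," and by the false-positive guarantee the expected number of spuriously "non-empty" intervals is at most $\eps$ times the number queried, i.e. $O(\eps \cdot (u/(nL)))$. (4) Conditioned on $\M$, the point's location is determined up to: which "non-empty" interval it lies in (encode the index among the $\le$ (# nonempty intervals) candidates) plus its offset of $\lg L$ bits inside that interval — but here one must be careful, since having identified a length-$L$ window we can recurse or simply pay $\lg L$ bits. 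The subtlety that produces the $L^{1-O(\eps)}$ rather than $L$ is that the number of spurious "non-empty" intervals, while $O(\eps)$ in expectation per-block-fraction, can be converted into a savings: roughly, $H(S\mid\M)$ per block is at most $\lg L + O(\eps \lg(u/(nL))) + O(1)$, and since $u/n \ge 5L$ with $L$ possibly large, $\lg(u/n) = \lg L + \lg(u/(nL))$, giving $H(S\mid\M) \le \lg(u/n) - (1-O(\eps))\lg L + \lg(1/\eps) + O(1)$ per block. (5) Sum over the $n$ blocks and rearrange $H(S) - H(S\mid\M) \le s$ to get $s \ge n\lg(u/n) - n\lg(u/n) + (1-O(\eps))n\lg L - n\lg(1/\eps) - O(n)$, i.e. $s \ge n\lg(L^{1-O(\eps)}/\eps) - O(n)$.

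The main obstacle I expect is step (3)–(4): making the counting of spurious false positives rigorous and, crucially, extracting the $(1-O(\eps))$ factor on $\lg L$ correctly. The false-positive guarantee is a per-query probabilistic statement ("each empty interval is answered non-empty with probability $\le \eps$"), so the bound on the number of spurious non-empty intervals is only in expectation over $\M$'s randomness — one has to fold this expectation into the conditional entropy $H(S\mid\M)$ cleanly (e.g. by Markov/Jensen on $\lg$ of the count, or by directly bounding $\E[\,\lg(\#\text{candidates})\,]$ via concavity), and ensure the decoder is well-defined for every realization of $\M$, not just typical ones. A second delicate point is choosing the interval tiling granularity so that (a) the true point is always captured, (b) overlaps don't inflate the count, and (c) the residual ambiguity inside a captured window is exactly $\lg L$ bits and not more; getting the constants in "$L \le u/5n$" to make the blocks comfortably larger than $L$ is what keeps the $\lg(u/(nL))$ term from going negative. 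Everything else — the entropy chain rule, summing over blocks, and the final algebraic rearrangement — is routine once the per-block conditional-entropy bound is in hand.
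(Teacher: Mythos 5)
Your high-level plan (encoding/entropy argument, partition into size-$U/n$ blocks, tile each block with length-$L$ intervals playing the role of the paper's ``top'' intervals, and charge the spurious ``non-empty'' answers against the savings) does line up with the paper's strategy. However, the proposal as written has a genuine gap at the step you flag as ``delicate'': the recovery of the $\lg L$-bit offset inside the identified length-$L$ window. If you ``simply pay $\lg L$ bits'' for the offset, the argument collapses to the Bloom-filter bound $n\lg(1/\eps)$ and does \emph{not} give the $\lg L$ savings. Concretely, with $T=U/(nL)$ tiles per block you get roughly $\lg(1+\eps T)$ bits to name the correct non-empty tile, plus $\lg L$ bits for the offset, totalling about $\lg(\eps T)+\lg L=\lg(U/n)-\lg(1/\eps)$ per block; subtracting from $\lg(U/n)$ leaves only $\lg(1/\eps)$. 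The paper's key device, which your sketch only gestures at with ``we can recurse,'' is the family of \emph{level-$i$-covering} intervals $\ell_i(x),r_i(x)$: these are length-$L$ intervals (legal queries!) aligned so that exactly one of them contains $x$ according to the $i$-th bit of $x$, and both are otherwise empty when $S$ is $L$-well-separated. Recovering the offset bit-by-bit via these queries costs one encoding bit only when \emph{both} return non-empty, which happens with probability $\le\eps$ per bit, so the expected offset cost is $O(\eps\lg L)$ rather than $\lg L$. That is precisely what produces the exponent $1-O(\eps)$. Without specifying this (or an equivalent) mechanism, the proof does not go through.

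Two further issues. First, your claimed per-block bound $H(S\mid\M)\le\lg L+O(\eps\lg(U/(nL)))+O(1)$ does not follow from either the naive or the hierarchical encoding, and the subsequent rewrite to $\lg(U/n)-(1-O(\eps))\lg L+\lg(1/\eps)+O(1)$ is algebraically inconsistent with it; there is also a sign slip in the final line, where $\cdots-n\lg(1/\eps)-O(n)$ is rewritten as $n\lg\bigl(L^{1-O(\eps)}/\eps\bigr)-O(n)$, the latter having $+n\lg(1/\eps)$. Second, placing one point per size-$(U/n)$ block does not by itself give the separation property the hierarchical step needs (points in adjacent blocks could be within $L$ of each other, letting $\ell_i(x)$ or $r_i(x)$ accidentally contain a neighboring point); the paper instead counts only $L$-well-separated sets and checks via Lemma 2 that there are still $\ge((U-4nL)/n)^n$ of them, so nothing essential is lost. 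Finally, the paper sidesteps the Jensen/expectation bookkeeping you worry about by using Markov plus a union bound to fix a single good representation $M^*$ and encoding relative to that, which is cleaner than bounding $H(S\mid\M)$ in expectation.
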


The proof of Theorem~\ref{thm:lower} is based on an encoding
argument. The high level idea is to use an approximate range emptiness
data structure to uniquely encode (and decode) every set of $n$ points
into a bit string whose length depends on the space usage and false
positive rate of the data structure. Since each point set is uniquely
encoded, this gives a lower bound on the size of the encoding and
hence the space usage of the data structure. 

For technical reasons, we do not encode every set of $n$ points, but
instead only point sets that are well-separated in the following
sense: Let $\Ins$ be the set of all $L$-\emph{well-separated} sets of
$n$ points in $[U]$, where a set $S$ of $n$ points is
$L$-well-separated if:
\begin{itemize}
\item For any two distinct $x,y \in S$, we have $|x-y| \geq 2L$.
\item For any $x \in S$, we have $x \geq 2L-1$ and $x \leq U-2L$.
\end{itemize}
Note that there are many $L$-well-separated when $L$ is not too close
to $U/n$:
\begin{lemma}
\label{lem:manywell}
There are at least $((U-4nL)/n)^n$ $L$-well-separated point sets
for any $L \leq U/4n$.
\end{lemma}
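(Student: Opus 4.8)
The plan is to set up a bijection between $L$-well-separated sets and the nonnegative integer solutions of a single linear equation, count those solutions by a stars-and-bars argument, and then lower-bound the resulting binomial coefficient.

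First I would represent a candidate set as a sorted sequence $x_1 < x_2 < \cdots < x_n$ and encode it by its ``gaps'': let $g_0 = x_1 - (2L-1)$, let $g_i = x_{i+1} - x_i - 2L$ for $1 \le i \le n-1$, and let $g_n = (U - 2L) - x_n$. The two defining properties of $L$-well-separatedness are exactly the assertion that $g_0,\dots,g_n \ge 0$, and conversely any nonnegative choice of $g_0,\dots,g_n$ reconstructs a unique $L$-well-separated set via $x_1 = g_0 + 2L - 1$ and $x_{i+1} = x_i + 2L + g_i$. Hence the map is a bijection, and a short telescoping computation shows $\sum_{j=0}^{n} g_j = U + 1 - 2L(n+1) =: N$, a quantity depending only on $U$, $L$, $n$.

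Next I would count: the number of nonnegative integer solutions of $g_0 + \cdots + g_n = N$ in $n+1$ variables is $\binom{N+n}{n}$, so the number of $L$-well-separated sets is \emph{exactly} $\binom{U - 2Ln - 2L + n + 1}{n}$. It then remains to check that the top $M := U - 2Ln - 2L + n + 1$ of this binomial coefficient satisfies $M \ge U - 4nL$ (which reduces to the trivial inequality $2L(n-1) + (n+1) \ge 0$), and that the hypothesis $L \le U/4n$ forces $N \ge 0$, so that $M \ge n$ and the count is positive. Finally, applying the standard bound $\binom{M}{n} = \prod_{i=0}^{n-1} \frac{M-i}{n-i} \ge (M/n)^n$, valid whenever $M \ge n$, yields $\binom{M}{n} \ge (M/n)^n \ge ((U-4nL)/n)^n$, as claimed.

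I do not expect a real obstacle here; the lemma is essentially bookkeeping. The only place that needs a little care is handling the boundary constraints $x \ge 2L-1$ and $x \le U-2L$ uniformly with the separation constraints --- this is what the two extra slack variables $g_0$ and $g_n$ accomplish --- and then making sure the arithmetic on $N$ and $M$ is slack enough that the top of the final binomial coefficient stays at least $U - 4nL$ (rather than $U - 4nL - \Theta(n)$, which would still suffice for the application but is avoided here at no cost).
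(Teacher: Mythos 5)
Your proof is correct, and it takes a genuinely different route from the paper's. The paper argues by sequential placement: it picks the $n$ points one at a time, observes that at step $i$ at most $4iL$ positions are forbidden (by the boundary constraints and the $i-1$ points already placed), so there are at least $U-4iL \ge U-4nL$ choices, and then divides by $n!$ to account for the overcounting of each unordered set, giving the lower bound $(U-4nL)^n/n! \ge ((U-4nL)/n)^n$. You instead exhibit a bijection between $L$-well-separated sets and nonnegative-integer solutions of $g_0+\cdots+g_n = U+1-2L(n+1)$ via the gap encoding, which by stars-and-bars yields the \emph{exact} count $\binom{U-2Ln-2L+n+1}{n}$, and then you lower-bound this binomial by $((U-4nL)/n)^n$ using the inequality $\binom{M}{n}\ge (M/n)^n$. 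Your arithmetic checks out: the telescoping sum gives $N=U+1-2L(n+1)$, the hypothesis $L\le U/4n$ indeed forces $N\ge 0$ (since $\tfrac{n+1}{2n}\le 1$), and $M=N+n\ge U-4nL$ reduces to $2L(n-1)+(n+1)\ge 0$. The bijective approach is slightly more work to set up, but it buys you the exact count (and hence transparently the tightness of the bound); the paper's counting-with-repetition argument is shorter and suffices since only a lower bound is needed. Both are fine, and both give precisely the stated constant $(U-4nL)/n$ inside the exponent.
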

\begin{proof}
Consider picking one point at a time from $U$, each time making sure
the constructed set is $L$-well-separated. For the $i$'th point,
$i=1,\dots,n$, there are $U-4L-4(i-1)L = U-4iL$ valid choices for
the location of the $i$'th point. Since the same set of points can be
constructed in $n!$ ways using this procedure, there are at least 
$$\frac{(U-4nL)^n}{n!} \geq \left(\frac{u-4nL}{n}\right)^n.$$
$L$-well-separated point sets.
\end{proof}

We now set out to encode any set $S \in \Ins$ in a number of bits
depending on the performance of a given data structure. Assume for
simplicity that $U$ and $L$ are powers of $2$. The encoding algorithm
considers two carefully chosen sets of intervals:
\begin{itemize}
\item The \emph{top} intervals is the set of $U/L$ length $L$
  intervals $\Top = \{T_0,\dots,T_{U/L-1}\}$ of the form $T_i = [iL ;
    (i+1)L-1]$.
\item For each $x \in S$ and $i=1,\dots,\lg L-1$, the
  \emph{level-$i$-covering} intervals of $x$ are the two intervals:
\begin{eqnarray*}
  \ell_i(x) = [\lfloor x / 2^i \rfloor \cdot 2^i+2^{i-1}-L; \lfloor x / 2^i \rfloor \cdot 2^i+2^{i-1}-1]. \\
  r_i(x) = [\lfloor x/ 2^i \rfloor \cdot 2^i+2^{i-1} ; \lfloor x/2^i \rfloor \cdot 2^i +2^{i-1} + L -1].
\end{eqnarray*}
\end{itemize}
For intuition on why we take interest in these
intervals, observe the following:
\begin{lemma}
\label{lem:queryprops}
Consider the binary representation of a point $x \in S$. If the
$i$'th least significant bit of $x$ is $0$ (counting from $i=1$), then
$x$ is contained in $\ell_i(x)$. Otherwise $x$ is contained in
$r_i(x)$. Furthermore $\ell_i(x)$ and $r_i(x)$ contain no other points
from $S$ and any top interval $T_j$ contain at most one point from $S$.
\end{lemma}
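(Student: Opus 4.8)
The plan is to establish all three assertions of Lemma~\ref{lem:queryprops} by a direct computation from the definitions, combining elementary facts about binary representations with the $L$-well-separation of $S$. Throughout, I would fix $x \in S$ and $1 \le i \le \lg L - 1$, and abbreviate $m := \lfloor x/2^i \rfloor \cdot 2^i$, the value obtained from $x$ by clearing its $i$ lowest bits, so that $m \le x \le m + 2^i - 1$. The $i$'th least significant bit $b$ of $x$ is precisely the bit of weight $2^{i-1}$ in $x - m$, so $b = 0$ is equivalent to $m \le x \le m + 2^{i-1} - 1$ and $b = 1$ is equivalent to $m + 2^{i-1} \le x \le m + 2^i - 1$. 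I would also record once, for use in every estimate below, that $2^{i-1} \le L$ (indeed $2^{i-1} \le L/4$), which is immediate from $i \le \lg L - 1$; incidentally the same bound together with $2L-1 \le x \le U - 2L$ guarantees $\ell_i(x), r_i(x) \subseteq [U]$, so the intervals are well defined.

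For the first assertion I would simply substitute. If $b = 0$, then $x \le m + 2^{i-1} - 1$ by the case assumption and $x \ge m \ge m + 2^{i-1} - L$ since $2^{i-1} \le L$, so $x \in \ell_i(x)$. If $b = 1$, then $x \ge m + 2^{i-1}$ by the case assumption and $x \le m + 2^i - 1 = m + 2^{i-1} + (2^{i-1} - 1) \le m + 2^{i-1} + L - 1$, so $x \in r_i(x)$. This identifies which of the two covering intervals contains $x$ according to the bit.

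For the second assertion, the key point is that $\ell_i(x)$ and $r_i(x)$ are adjacent: their union is the single block $[m + 2^{i-1} - L;\, m + 2^{i-1} + L - 1]$ of $2L$ consecutive integers. Since $m \le x \le m + 2^i - 1$, every integer $z$ in this block satisfies $|x - z| \le 2^{i-1} + L - 1 < 2L$ (bounding $z - x$ using $x \ge m$, and $x - z$ using $x \le m + 2^i - 1$ together with $z \ge m + 2^{i-1} - L$). By $L$-well-separation, any $y \in S$ with $y \ne x$ has $|x - y| \ge 2L$, so such a $y$ lies outside $\ell_i(x) \cup r_i(x)$; hence neither $\ell_i(x)$ nor $r_i(x)$ contains a point of $S$ other than $x$. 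The third assertion is the same idea at smaller scale: a top interval $T_j = [jL;\,(j+1)L-1]$ spans only $L - 1 < 2L$, so it cannot hold two distinct points of $S$ without violating well-separation.

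I do not expect a genuine obstacle; the proof is a sequence of inequality checks. The only things that need care are the off-by-one bookkeeping at the interval endpoints and making sure the hypothesis $i \le \lg L - 1$ (equivalently $2^{i-1} \le L$) is invoked every time a term of size $2^{i-1}$ or $2^i$ must be absorbed into a term of size $L$ — this is exactly where the restriction on the range of $i$ gets used, and where the encoding later needs the ``slack'' between the covering intervals and the well-separation gap.
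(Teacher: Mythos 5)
Your proof is correct and follows essentially the same route as the paper's: determine $x$'s position within $[\lfloor x/2^i\rfloor 2^i,\ \lfloor x/2^i\rfloor 2^i + 2^i - 1]$ from its $i$'th bit, check the resulting interval inclusion, and then invoke $L$-well-separation to exclude other points. The only cosmetic difference is that the paper bounds the distance from $x$ to an endpoint of $\ell_i(x), r_i(x)$ by $2^{i+1}\le L$ and adds the interval length, whereas you bound the span of $\ell_i(x)\cup r_i(x)$ directly; both give $|x-z|<2L$ and are equivalent in substance.
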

\begin{proof}
Rewrite $x = \lfloor x/2^i \rfloor \cdot 2^i + (x \mod 2^i)$. If the
$i$'th least significant bit of $x$ is $0$ then $(x \mod 2^i) \in \{0,\dots,
  2^{i-1}-1\}$, i.e. $x \in [\lfloor x/2^i \rfloor \cdot 2^i ; \lfloor x/2^i \rfloor \cdot 2^i + 2^{i-1}-1] \subseteq \ell_i(x).$ Otherwise $(x \mod 2^i) \in \{2^{i-1}, \dots, 2^i-1\}$ implying $x \in [\lfloor x/2^i \rfloor \cdot 2^i + 2^{i-1} ; \lfloor x/2^i \rfloor \cdot 2^i + 2^i -1] \subseteq r_i(x).$

Next, observe that the intervals $\ell_i(x)$ and $r_i(x)$ have length
$L$ and both have an endpoint of distance less than $2^{i+1} \leq L$
from $x$. Since $S$ is $L$-well-separated, it follows that no other
points can be contained in $\ell_i(x)$ and $r_i(x)$. For the top
intervals $T_j$, the claimed property similarly follows from $S$ being
$L$-well-separated.
\end{proof}

We are ready to give the encoding and decoding algorithms for all
$L$-well-separated point sets $S \in \Ins$. For the encoding and
decoding procedures, we assume the existence of a data structure $D$
with $s$ bits of space and false positive rate $\eps>0$ for query
intervals of length $L$.

\paragraph{Encoding Algorithm.}
In this paragraph we present the encoding algorithm. Let $S \in \Ins$
be an $L$-well-separated point set. Observe that if we run the
randomized construction algorithm of $D$ on $S$, we are returned a
(random) memory representation $\M \in \{0,1\}^s$. For the memory
representation $\M$, the answer to each query is fixed, i.e. it is the
randomized choice of $\M$ that gives the false positive rate of $\eps$
for each query interval of length $L$ (the memory representation
encodes e.g. a particular hash function from a family of hash
functions). Letting $A(\M)$ denote the number of false positives
amongst the top intervals $\Top$, we get from linearity of expectation
that $\E[A(\M)] \leq \eps U/L$. Similarly, let $B(\M)$ denote the
number of false positives amongst all the intervals $\ell_i(x)$ and
$r_i(x)$ for $x \in S$ and $i \in \{1,\dots,\lg L-1\}$. We have
$\E[B(\M)] \leq \eps n \lg L$ since precisely $n \lg L$ of the
intervals $\ell_i(x)$ and $r_i(x)$ are empty. Since both $A(\M)$ and
$B(\M)$ are non-negative, it follows from Markov's inequality and a
union bound that:
$$\Pr_{\M}\left[A(\M) \leq \frac{1}{1-1/\gamma} \cdot \eps U/L
  \bigwedge B(\M) \leq \gamma \eps n \lg L\right] > 0$$ for any
parameter $\gamma>1$. Since the probability is non-zero, there exists
a particular memory representation $M^* \in \{0,1\}^s$ for which:
$$A(M^*) \leq \frac{1}{1-1/\gamma} \cdot \eps U/L \bigwedge B(M^*) \leq \gamma \eps n \lg L.$$
The encoding consists of the following:
\begin{enumerate}
\item The $s$ bits of $M^*$.
\item Let $\Top^*$ be the subset of $\Top$ that return
  \emph{non-empty} on $M^*$. We encode the set of $n$ top intervals
  $\Top_S$ containing $S$, where $\Top_S$ is specified as an $n$-sized
  subset of $\Top^*$. This costs $\lg \binom{|\Top^*|}{n} = \lg
  \binom{A(M^*)+n}{n}$ bits.
\item For each interval $T_j \in \Top_S$ in turn (from left to right),
  let $x$ be the point from $S$ in $T_j$. For $i=\lg L-1, \dots,1$ in
  turn, check whether both $\ell_i(x)$ and $r_i(x)$ return
  \emph{non-empty} on $M^*$. If so, we append one bit to our
  encoding. This bit is $0$ if $x \in \ell_i(x)$ and it is $1$ if $x
  \in r_i(x)$. Otherwise we continue without writing any bits to the
  encoding. In total this costs exactly $B(M^*)$ bits (each bit can be
  charged to exactly one false positive).
\end{enumerate}

This concludes the description of the encoding algorithm. Next we show
that $S$ can be uniquely recovered from the encoding.

\paragraph{Decoding Algorithm.}
In the following, we show how we recover an $L$-well-separated point
set from the encoding described in the paragraph above. The decoding
algorithm is as follows:
\begin{enumerate}
 \item Read the $s$ first bits of the encoding to recover $M^*$.
 \item Run the query algorithm for every query in $\Top$ with the
   memory representation $M^*$. This recovers $T^*$.
\item From $T^*$ and the bits written in step 2 of the encoding
  algorithm, we recover $T_S$.
\item For each interval $T_j \in T_S$ (from left to right), let $x$ be
  the point from $S$ in $T_j$. This step recovers $x$ as follows: From
  $T_j$, we know all but the $\lg L-1$ least significant bits of
  $x$. Now observe that the definition of $\ell_{\lg L -1}(x)$ and
  $r_{\lg L -1}(x)$ does not depend on the $\lg L-1$ least significant
  bits of $x$, hence we can determine the intervals $\ell_{\lg
    L-1}(x)$ and $r_{\lg L -1}(x)$ from $T_j$. We now run the query
  algorithm for $\ell_{\lg L -1}(x)$ and $r_{\lg L-1}(x)$ with $M^*$
  as the memory. If only one of them returns \emph{non-empty}, it
  follows from Lemma~\ref{lem:queryprops} that we have recovered the
  $(\lg L-1)$'st least significant bit of $x$. If both return
  \emph{non-empty}, we read off one bit from the part of the encoding
  written during step 3 of the encoding algorithm. This bit tells us
  whether $x \in \ell_{\lg L-1}(x)$ or $x \in r_{\lg L-1}(x)$ and we have
  again recovered the next bit of $x$. Note that we have reduced the
  number of unknown bits in $x$ by one and we now determine $\ell_{\lg L
    -2}(x)$ and $r_{\lg L-2}(x)$ and recurse. This process continues until all
  bits of $x$ have been recovered and we continue to the next $T_{j'}
  \in T_S$.
\end{enumerate}

Thus we have shown how to encode and decode any $L$-well-separated
point set $S$. We are ready to derive the lower bound.

\paragraph{Analysis.}
The size of the encoding is
\begin{eqnarray*}
s + \lg \binom{A(M^*)+n}{n} + B(M^*) &\leq&\\
s + \lg \binom{\frac{1}{1-1/\gamma} \cdot \eps U / L + n}{n} + \gamma \eps n \lg L &\leq& \\
s + n \lg \left(\frac{e \left(\frac{1}{1-1/\gamma} \cdot \eps U / L + n\right)}{n}\right) + \gamma \eps n \lg L &\leq&\\
s + n \lg \left( \frac{\eps U}{(1-1/\gamma)nL}\right)+ \gamma \eps n \lg L + O(n) &=&\\
s + n \lg \left( \frac{\eps U}{(1-1/\gamma)nL^{1-\eps \gamma}}\right)+O(n)
\end{eqnarray*}
bits. But from Lemma~\ref{lem:manywell} we have that there are at least
$((U-4nL)/n)^n$ distinct $L$-well-separated point sets. Hence we must
have
\begin{eqnarray*}
s + n \lg \left( \frac{\eps U}{(1-1/\gamma)nL^{1-\eps \gamma}}\right)+O(n) &\geq& n \lg \left(\frac{U-4nL}{n}\right) \Rightarrow \\
s &\geq& n \lg \left(\frac{(1-1/\gamma)L^{1-\eps \gamma}(U-4nL)}{\eps U}\right) - O(n).
\end{eqnarray*}
For $L \leq U/5n$, this is:
\begin{eqnarray*}
s &\geq& n \lg \left(\frac{(1-1/\gamma)L^{1-\eps \gamma}}{\eps}\right) - O(n).
\end{eqnarray*}
Setting $\gamma$ to a constant, this simplifies to
$$
s \geq n \lg \left(\frac{L^{1-O(\eps)}}{\eps}\right)-O(n),
$$
which completes the proof of Theorem~\ref{thm:lower}.

\subsection{Extension to two-sided errors}
\label{sec:twosided}
In this section, we extend the lower bound above to the case of two
sided error, i.e. we have both false positives and false
negatives. The result of the section is the following:
\begin{theorem}
For the approximate range emptiness problem on $n$ points from a
universe $[U]$, any data structure answering all query intervals of a
fixed length $L \leq u/5n$ with two-sided error rate $\eps$, must use
at least
$$s \geq n \lg(L/\eps) - O(n)$$
bits of space when $0 < \eps < 1/\lg U$, and at least
$$
s = \Omega\left(\frac{n \lg(L \lg U)}{\lg_{1/\eps} \lg U}\right)
$$
bits for $1/\lg U \leq \eps \leq 1/2 - \Omega(1)$.
\end{theorem}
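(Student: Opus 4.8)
The plan is to prove both bounds by augmenting the encoding argument behind Theorem~\ref{thm:lower} so that it survives false negatives. As before the encoder runs the randomized construction on a well-separated set $S\in\Ins$, obtains a memory image $\M\in\{0,1\}^s$, and fixes a good image $M^*$; but now four quantities must be controlled simultaneously — the numbers of false positives and of false negatives among the top intervals $\Top$, and the numbers of false positives and of false negatives among the level-$i$-covering intervals. By Markov's inequality and a union bound there is an $M^*$ for which all four are within a constant factor of their expectations $\eps U/L$, $\eps n$, $\eps n\lg L$ and $\eps n\lg L$. The fundamental new difficulty is that the decoder can no longer treat an \emph{empty} answer as ground truth, so the bit-by-bit recovery of Theorem~\ref{thm:lower} can run off track and we must pay to describe where.

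For the regime $0<\eps<1/\lg U$ I would modify the encoding of Theorem~\ref{thm:lower} in two places. First, $\Top_S$ is no longer a subset of the \emph{non-empty}-answering intervals $\Top^*$: I encode the occupied top intervals lying inside $\Top^*$ as before, and the remaining (false-negative) occupied top intervals as a subset of $\Top\setminus\Top^*$; since there are only $O(\eps n)=O(n/\lg U)$ of the latter this costs $O(\eps n\lg U)=O(n)$ extra bits. Second, when recovering the $i$-th bit of a point $x$ the decoder queries $\ell_i(x)$ and $r_i(x)$ and now faces four situations: if exactly one answers \emph{non-empty} and no error is present, it is the occupied one (the free common case); if both answer \emph{non-empty} (a false positive) or both answer \emph{empty} (a false negative on the occupied interval) the decoder reads one correction bit; and if exactly one answers \emph{non-empty} but this is a false positive masking a false negative on the occupied interval, the decoder would be misled, so the encoder lists these ``flipped'' $(x,i)$ pairs as a subset of the $O(n\lg L)$ covering events. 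As there are at most $O(\eps n\lg L)=O(n\lg L/\lg U)$ false negatives among covering intervals, this subset together with the one-bit corrections contributes only a lower-order number of bits; feeding the resulting encoding length into the counting argument of Theorem~\ref{thm:lower}, using $U-4nL\ge U/5$, yields $s\ge n\lg(L/\eps)-O(n)$.

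For $1/\lg U\le\eps\le 1/2-\Omega(1)$ a single length-$L$ query carries too little reliable information to drive the recursion, so one must coarsen it. The natural plan is to group the $\lg L$ covering levels of each point into blocks of $t$ levels and, for each block, replace the pair $\ell_i(x),r_i(x)$ by the family of $2^{t}$ length-$L$ intervals straddling the $2^{t}$ dyadic positions $x$ could occupy at that scale; in the error-free picture exactly the interval containing $x$ answers \emph{non-empty} and the decoder learns $t$ bits of $x$ at once. Choosing $t=\Theta(\lg_{1/\eps}\lg U)$, i.e. $t=\Theta(\lg\lg U/\lg(1/\eps))$, balances the number of blocks, $\Theta(n\lg L/t)$, against the cost of describing (for the chosen $M^*$) which blocks are corrupted by a false negative on the occupied interval or a false positive among the remaining $2^{t}-1$ intervals, and of decoding those blocks; propagating this through the counting argument of Theorem~\ref{thm:lower} is what should produce the stated $\Omega\!\big(n\lg(L\lg U)/\lg_{1/\eps}\lg U\big)$ bound.

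The main obstacle in both regimes is quantifying and paying for the decoder's loss of trust in \emph{empty} answers. In the first regime this is tractable precisely because $\eps<1/\lg U$ forces the false negatives to be a $1/\lg U$ fraction of the relevant intervals, so every correction is lower order; the delicate point is the ``flipped'' case, where a false positive and a false negative conspire to hand the decoder a wrong but plausible-looking answer that must be flagged without spending a bit at every covering event. In the second regime the obstacle is instead obtaining any amplification at all: one cannot assume the data structure's errors on distinct queries are independent, so the argument must rest purely on the global scarcity of errors in $M^*$ and on making each corrupted block chargeable to several erroneous queries — and getting the per-block reliability and the number of blocks to trade off to exactly $\lg_{1/\eps}\lg U$ is where the real work lies.
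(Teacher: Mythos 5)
The proposal correctly reuses the encoding machinery from Section~\ref{sec:lower} and correctly identifies that false negatives must be separately described, but it diverges from the paper in a way that creates a real gap in the harder regime $1/\lg U \leq \eps \leq 1/2-\Omega(1)$.

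For $0 < \eps < 1/\lg U$ your plan is essentially what the paper does with $k=1$: account for the (now possible) false negatives by explicitly writing down which top intervals and which covering queries were false negatives, and then run the one-sided argument as if they had been corrected. The ``flipped'' case you single out (a false positive paired with a false negative giving a wrong but consistent-looking answer) is not actually a separate case in the paper's bookkeeping: the paper first encodes the complete set of false negatives among $\Top$ and among the queries $\ell_i(x), r_i(x)$ (using $\rank_S(x)$ and the level $i$ to index them in $[2n\lg L]$), so by the time the bit-by-bit recovery runs, \emph{all} false negatives are already corrected and there is nothing to detect on the fly. Your accounting is a bit more delicate but it lands in the same place, and I agree the extra cost is lower order in this regime.

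For $1/\lg U \leq \eps$ the block idea is a genuine departure from the paper, and I do not believe it works as described. You propose to set $t = \Theta(\lg_{1/\eps}\lg U)$ and, per block, issue $2^t$ length-$L$ queries of which exactly one should be occupied. But the data structure's per-query error rate stays $\eps$: the expected number of false positives in a block is about $2^t\eps$. For $\eps$ bounded away from zero, $t=\Theta(\lg\lg U)$ gives $2^t = (\lg U)^{\Theta(1)}$, so $2^t\eps \gg 1$ and essentially every block is maximally ambiguous; the encoder must then spend $\approx t$ bits per block to disambiguate, which is exactly what the block was supposed to save, and the argument collapses. Even if you shrink $t$ to keep $2^t\eps = O(1)$ (i.e.\ $t \approx \lg(1/\eps)$), the encoder still pays $\Omega(1)$ bits per block just to flag/disambiguate, for a total overhead of $\Omega(n\lg L/\lg(1/\eps))$; for $\eps$ near a constant this is $\Omega(n\lg L)$, which is larger than the $\Omega\bigl(n\lg(L\lg U)/\lg_{1/\eps}\lg U\bigr)$ bound you are trying to prove, so the counting argument yields nothing.

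The missing idea is the paper's \emph{amplification} step: build $k$ independent copies of the structure and answer by majority. By a Chernoff bound the error rate drops to $\delta = \eps^{\Omega(k)}$, which makes both false positives and false negatives rare enough that the one-sided-style overhead becomes negligible, at the price of charging $ks$ bits instead of $s$. One then derives a lower bound on $ks$ of the form $n\lg(L/\delta) - O(n)$ and divides by $k$, and choosing $k = \max\{1, \Theta(\lg_{1/\eps}\lg U)\}$ is precisely what balances the two effects to give $\Omega\bigl(n\lg(L\lg U)/\lg_{1/\eps}\lg U\bigr)$. This is cleaner and more robust than coarsening the recursion: it makes the argument formally identical to the one-sided case with a smaller effective $\eps$, and avoids all of the per-block bookkeeping. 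You should replace the block construction by this repetition/majority reduction; your false-negative description scheme can then be kept essentially as in the small-$\eps$ regime.
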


As in the previous section, we use an approximate range emptiness data
structure to encode any set of $L$-well-separated points. The encoding
procedure follows that of the one-sided error case closely and we
assume the reader has read Section~\ref{sec:lower}.

So assume we are given an approximate range emptiness data structure
$D$ with $s$ bits of space and two-sided error rate $0 < \eps \leq
1/2-\Omega(1)$. Our first step is to reduce the error rate using a
standard trick: Upon receiving an input set of points $S$, implement
$k$ copies of $D$ on $S$, where the randomness used for each copy is
independent. When answering a query, we ask the query on each copy of
the data structure and return the majority answer. Since $\eps <
1/2-\Omega(1)$, it follows from a Chernoff bound that the error rate
is bounded by $\eps^{O(k)}$. Thus we now have a data structure using
$k s$ bits of space with error rate $\delta=\eps^{O(k)}$ for a parameter $k \geq 1$ to be determined later.

\paragraph{Encoding Algorithm.}
Let $S \in \Ins$ be an $L$-well-separated point set to encode. Define
$\Top$, $\ell_i(x)$, $r_i(x)$, $\M$, $A(\M)$ and $B(\M)$ as in
Section~\ref{sec:lower}. Define $C(\M)$ as the number of false
negatives amongst queries in $\Top$ and $F(\M)$ as the number of false
negatives amongst queries $\ell_i(x)$ and $r_i(x)$ over all $i$ and
$x$. From Markov's inequality and a union bound, we conclude that
there must exist a set of memory bits $M^* \in \{0,1\}^{ks}$ for
which:
\begin{itemize}
\item $A(M^*) \leq 4 \delta U/L$.
\item $B(M^*) \leq 4 \delta n \lg L$.
\item $C(M^*) \leq 4 \delta n$.
\item $F(M^*) \leq 4 \delta n \lg L$.
\end{itemize}
The encoding algorithm first writes down some bits that allow us to
correct all the false negatives. Following that, it simply writes down
the encoding from the previous section (since we have reduced the
decoding problem to the case of no false negatives). Describing the
false negatives is done as follows:
\begin{enumerate}
 \item First we write down $\lg(U/L)$ bits to specify
   $C(M^*)$. Following that, we encode the $C(M^*)$ false negatives
   amongst $\Top$ using $\lg \binom{U/L}{C(M^*)}$ bits.
 \item Secondly, we spend $\lg(n \lg L)$ bits to specify $F(M^*)$. We
   then use $\lg \binom{2n\lg L}{F(M^*)}$ bits to specify the false
   negatives amongst $\ell_i(x)$ and $r_i(x)$. Note that we avoid the
   explicit encoding of $\ell_i(x)$ and $r_i(x)$ by mapping a false
   negative $\ell_i(x)$ to the number $\Gamma(\ell_i(x)) = \rank_S(x)\cdot 2 \lg L +
   2(i-1)$ and $r_i(x)$ to the number $\Gamma(r_i(x))=\rank_S(x) \cdot 2 \lg L +
   2(i-1)+1$. Here $\rank_S(x)$ denotes the number of elements in $S$
   smaller than $x$. Note that $\Gamma(\ell_i(x)),\Gamma(r_i(x)) \in [2n\lg L]$ and
   the claimed space bound follows.
\item Lastly, we run the entire encoding algorithm from
  Section~\ref{sec:lower} assuming the false negatives have been
  corrected.
\end{enumerate}

\paragraph{Decoding Algorithm.}
In the following, we describe the decoding algorithm.
\begin{enumerate}
  \item From the bits written in step 1 of the encoding procedure, we
    correct all false negatives amongst queries in $\Top$. 
  \item From the secondary encoding, we recover $M^*$. Since we have
    corrected all false negatives in $\Top$, we also recover $T_S$
    using steps 2-3 of the decoding procedure in
    Section~\ref{sec:lower}.
  \item We now run step 4 of the decoding algorithm in
    Section~\ref{sec:lower}. Observe that for each $x$ we are about to
    recover, we know $\rank_S(x)$. Therefore we can use the bits
    written in step 2 of the encoding algorithm above to correct all
    false negatives amongst queries $\ell_i(x)$ and
    $r_i(x)$. Following that, we can finish step 4 of the decoding
    algorithm in Section~\ref{sec:lower} as all false negatives have
    been corrected.
\end{enumerate}

\paragraph{Analysis.}
Examining the encoding algorithm above and the one in
Section~\ref{sec:lower}, we see that the produced encoding uses:
$$
ks + \lg \binom{U/L}{C(M^*)} + \lg \binom{2n\lg L}{F(M^*)} + \lg \binom{A(M^*)+n}{n} + B(M^*) + O(\lg n)
$$
bits. By the arguments in Section~\ref{sec:lower}, this is bounded by:
$$
ks+\lg \binom{U/L}{C(M^*)} + \lg \binom{2n\lg L}{F(M^*)} + n \lg\left(\frac{\delta U}{4nL^{1-4\delta}}\right)+O(n).
$$
Using the bounds on $C(M^*)$ and $F(M^*)$, we see that this is at most:
$$
ks + 4\delta n \lg \left(\frac{U}{L \delta n}\right) + 4 \delta n \lg L \lg\left(\frac{e}{2 \delta}\right)+ n \lg\left(\frac{\delta U}{4nL^{1-4\delta}}\right)+O(n).
$$
Fixing $k = \max\{1, \Theta(\lg_{1/\eps} \lg U)\}$, the second and third term becomes $O(n)$ and we get:
$$
ks + n \lg \left(\frac{\delta U}{4nL^{1-4\delta}}\right) + O(n).
$$
For $L \leq U/5n$ we have at least $(U/5n)^n$ distinct $L$-well-separated point sets, thus we must have:
\begin{eqnarray*}
ks &\geq& n \lg \left( \frac{L^{1-O(\delta)}}{\delta}\right)-O(n) \Rightarrow \\
s &\geq& \frac{n \lg\left(\frac{L^{1-O(1/\lg U)}}{\min \{ \eps, 1/\lg U\}}\right)}{k}-O(n)\Rightarrow\\
s &\geq& \frac{n \lg\left(\frac{L}{\min \{ \eps, 1/\lg U\}}\right)}{k}-O(n).\\
\end{eqnarray*}
When $\eps < 1/\lg U$ we get $k=1$ and this simplifies to
$$
s = n \lg(L/\eps)-O(n),
$$
and for $\eps > 1/\lg U$ it becomes:
$$
s = \Omega\left(\frac{n \lg(L \lg U)}{\lg_{1/\eps} \lg U}\right).
$$

\section{Upper Bounds}
\label{sec:upper}
In this section we describe our approximate range emptiness data
structures. 
%
The data structure consists of a (non-approximate) succinct range
emptiness data structure applied to the input points after they have
been mapped to a smaller universe. More specifically, we use a
carefully chosen hash function to map the input points to a universe
of size $r=nL/\eps$, and store them in a range emptiness data
structure that has constant query time and uses
$n\log(U/n)+O(n\log^\delta (U/n))$ bits of space for storing $n$
points from a universe of size $U$.
This gives us our main upper bound result.
\begin{theorem}
\label{thm:lower}
For any $\varepsilon>0,L$ and $n$ input points from a universe $[U]$,
there is a data structure that uses $n\log(L/\varepsilon) +
O(n\log^\delta (L/\varepsilon))$ bits of space, where $\delta>0$ is
any desired constant, that answers range emptiness queries for all
ranges of length at most $L$ in constant time with a false positive rate of at most
$\eps$.
\end{theorem}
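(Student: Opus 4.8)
The plan is to reduce the approximate problem over the huge universe $[U]$ to an \emph{exact} range-emptiness problem over a tiny universe, and then invoke our succinct (non-approximate) range-emptiness structure as a black box. I would pick a power of two $w$ with $L\le w<2L$, set $m=\lceil n/\eps\rceil$ and $r=mw=\Theta(nL/\eps)$, cut $[U]$ into blocks $B_q=[qw;(q+1)w-1]$, and draw two pairwise-independent hash functions, independent of each other: $\phi$ from block indices to $[m]$ and $\psi$ from block indices to $[w]$. I would then define $h:[U]\to[r]$ by $h(x)=\phi(\lfloor x/w\rfloor)\cdot w+\bigl((x+\psi(\lfloor x/w\rfloor))\bmod w\bigr)$, so that within each block $h$ cyclically shifts the offset and then relocates the whole block; consequently two points in the same block never collide, and points in different blocks collide only if $\phi$ identifies their block indices. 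The data structure stores the (at most $n$) distinct values of $h(S)$ in our succinct $1$d range-emptiness structure over $[r]$, together with the $O(\log U)$-bit seeds of $\phi$ and $\psi$.

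To answer a query $I=[a;b]$ of length $\ell=b-a+1\le L$: since $\ell-1<w$, $I$ meets at most two consecutive blocks $B_{q_0},B_{q_0+1}$, so write $I=I_0\cup I_1$ with $I_j=I\cap B_{q_0+j}$. Inside its block each $I_j$ is an interval of offsets, which the cyclic shift $\psi(q_0+j)$ turns into at most two contiguous offset-intervals, so $h(I_j)$ is a union of at most two contiguous subintervals of $[r]$ whose endpoints I compute in $O(1)$ time. I would query the succinct structure on these (at most four) subintervals and answer \emph{non-empty} iff one of them is non-empty, which is $O(1)$ total time. Completeness is immediate: if $x\in S\cap I$ then $x\in I_j$ for some $j$, so $h(x)$ lies in $h(I_j)$ and $h(x)\in h(S)$; hence a non-empty interval is never missed.

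For the false-positive rate, assume $S\cap I=\emptyset$ and fix $y\in S$. If $y$ lies in $B_{q_0+j}$ then $y\notin I_j$, and since $h$ only cyclically shifts offsets within a block this already forces $h(y)\notin h(I_j)$; so $h(y)$ can land in $h(I_j)$ only through a \emph{different} block, which requires $\phi(\lfloor y/w\rfloor)=\phi(q_0+j)$ (probability $\le 1/m$ by pairwise independence of $\phi$) and, since $\phi$ and $\psi$ are independent, also the shifted offset of $y$ to fall into the shifted image of $\mathrm{Off}(I_j)$, a set of size $|I_j|$; since $\psi(\lfloor y/w\rfloor)$ is uniform on $[w]$ given $\psi(q_0+j)$, the latter has probability $\le |I_j|/w$. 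Summing over $j\in\{0,1\}$ gives $\Pr[h(y)\in h(I)]\le \tfrac{1}{m}\cdot\tfrac{|I_0|+|I_1|}{w}=\tfrac{1}{m}\cdot\tfrac{\ell}{w}$, and a union bound over the $\le n$ points of $S$ yields false-positive probability $\le \tfrac{n}{m}\cdot\tfrac{\ell}{w}\le \eps\ell/L\le\eps$, which in fact also delivers the sharper query-length-dependent bound $\eps\ell/L$ mentioned in the introduction.

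The space is that of the succinct structure on at most $n$ points over $[r]=[\Theta(nL/\eps)]$, which a routine calculation gives as $n\log(L/\eps)+O(n\log^\delta(L/\eps))$ bits, plus negligibly many bits for the two seeds. The genuinely hard part of the whole upper bound is the succinct range-emptiness/reporting structure itself, which I am treating as a black box here; within this reduction the one delicate choice is the per-block cyclic shift $\psi$, which is precisely what keeps every query to $O(1)$ contiguous pieces while decorrelating each point's offset from the query, so that the analysis survives with only pairwise (rather than full) independence and with the universe squeezed all the way down to $r=\Theta(nL/\eps)$. The remaining work is bookkeeping — checking that $h(I_j)$ really is at most two contiguous intervals under every block alignment, and that the succinct structure's $O(n\log^\delta(\cdot))$ additive term still collapses to $O(n\log^\delta(L/\eps))$ when $h(S)$ has fewer than $n$ distinct values.
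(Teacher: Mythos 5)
Your proposal is correct, but it uses a genuinely different hash construction than the paper. The paper picks a single pairwise-independent $u:[U/r]\to[r]$ with $r=nL/\eps$ and sets $h(x)=(u(\lfloor x/r\rfloor)+x)\bmod r$; each size-$r$ block of $[U]$ is cyclically shifted within $[r]$ by $u$, and the one hash value simultaneously plays the role of your $\phi$ (``where does this block land'') and your $\psi$ (``decorrelate the offset of a point from the offsets of the query''), because two points in different size-$r$ blocks collide iff $u$ lands on a specific difference, giving collision probability exactly $1/r$. You instead use blocks of size $w\approx L$, a slot hash $\phi$ to one of $m\approx n/\eps$ slots, and an independent per-block offset hash $\psi$. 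Your analysis and false-positive bound (including the sharper $\eps\ell/L$) are sound, and your observation that $\psi$ is needed to avoid a deterministic offset match is correct \emph{within your scheme}; but in the paper's scheme this issue never arises, since the single shift $u(\lfloor x/r\rfloor)$ already randomizes the offset. What your approach buys is a more geometrically explicit picture with blocks at the query scale $L$; what the paper's approach buys is less machinery (one hash family instead of two mutually independent ones) and a one-line collision lemma. Both yield $O(1)$ contiguous pieces for $h(I)$ (the paper says ``at most two,'' which can be up to four when $I$ straddles a block boundary and wraps, but this is immaterial) and both plug into the same succinct 1d range-emptiness structure to give the claimed space.
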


\subsection{Universe Reduction}
Let $r = n L / \varepsilon$ and choose $u: [U/r] \rightarrow [r]$ from a pairwise independent family.
Now define a hash function (similar in spirit to the ``non-expansive'' hash functions of Linial and Sasson~\cite{linial}) that preserves locality, yet has small collision probability:
\[ h(x) = (u(\lfloor x/r \rfloor) + x) \mod r \enspace . \]
\begin{lemma}
For $x_1\ne x_2$ we have $\Pr[h(x_1)=h(x_2)] \leq 1/r$. 
\end{lemma}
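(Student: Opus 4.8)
The plan is to split on whether $x_1$ and $x_2$ lie in the same ``block'' of $r$ consecutive universe elements. First I would write $x_j = q_j r + s_j$ with $q_j = \lfloor x_j/r\rfloor$ and $s_j = x_j \bmod r \in [r]$, and observe that since $x_j \equiv s_j \pmod r$ we have $h(x_j) = (u(q_j) + s_j) \bmod r$. So the collision event depends on $x_j$ only through $q_j$ (via the offset $u(q_j)$) and through the low part $s_j$.

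In the case $q_1 = q_2$, the two points get the same offset $u(q_1) = u(q_2)$, so a collision would force $s_1 \equiv s_2 \pmod r$, i.e. $s_1 = s_2$, and hence $x_1 = x_2$ — a contradiction. Thus the collision probability is $0$ in this case. In the case $q_1 \neq q_2$, I would rewrite the collision event $h(x_1) = h(x_2)$ as $u(q_1) - u(q_2) \equiv s_2 - s_1 \pmod r$. Since $q_1 \neq q_2$, pairwise independence of the family makes $(u(q_1), u(q_2))$ uniform on $[r]^2$; conditioning on the value of $u(q_2)$, the value $u(q_1)$ is still uniform on $[r]$, so $(u(q_1) - u(q_2)) \bmod r$ is uniform on $[r]$ and equals the fixed target $(s_2 - s_1) \bmod r$ with probability exactly $1/r$. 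Taking the larger of the two cases gives $\Pr[h(x_1) = h(x_2)] \le 1/r$.

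There is no real obstacle here; the only point worth stating carefully is that the locality-preserving summand $+x$ in the definition of $h$ cancels modulo $r$ within a block — which is precisely what makes the hash ``non-expansive'' — so within-block pairs can never collide, while across blocks the independent offsets $u(\cdot)$ randomize the comparison completely. Everything else is a one-line modular-arithmetic check.
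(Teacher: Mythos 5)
Your proof is correct and follows essentially the same route as the paper: split on whether $\lfloor x_1/r\rfloor = \lfloor x_2/r\rfloor$, observe the within-block case forces $x_1=x_2$, and in the cross-block case use pairwise independence of $u$ to make $u(q_1)-u(q_2) \bmod r$ uniform. Your rewriting $x_j = q_j r + s_j$ and the remark about the $+x$ summand canceling within a block are just slightly more explicit bookkeeping of the identical argument.
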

\begin{proof}
Collision happens if and only if $u(\lfloor x_1 / r\rfloor) - u(\lfloor x_2 / r\rfloor) \equiv x_2 - x_1  \mod r $. 
If $\lfloor x_1 / r\rfloor = \lfloor x_2 / r \rfloor $ then $x_2 - x_1 \not\equiv 0 \mod r$, so the collision probability is zero.
Otherwise, since $u$ is pairwise independent $u(\lfloor x_1 / r \rfloor) - u(\lfloor x_2 / r \rfloor)$ is equal to a given fixed value with probability exactly $1/r$.
\end{proof}

We store $h(S)\subseteq [r]$ in a (non-approximate) range emptiness data structure.
To answer an approximate range membership query on an interval $I$, observe that the image $h(I)$ 
will be the union of at most two intervals $I_1,I_2\subseteq [r]$.
If either of these intervals are non-empty in $h(S)$ we report ``not empty'', otherwise we report
``empty''.
It is clear that there can be no false negatives.
False positives occur when $I\cap S = \emptyset$ but there is a hash
collision among a point $x\in S$ and a point $y\in I$.
We can bound the false positive rate by a union bound over all possible collisions:
\[ \sum_{x\in S} \sum_{y\in I} \Pr[h(x)=h(y)] \leq n L / r \leq \varepsilon \enspace . \]

\subsection{Range Emptiness Data Structure}

We first describe a range emptiness data structure for size-$n$
subsets of $[U]$ that answers queries in constant time and uses $n \lg
U + O(n \lg^\delta U)$ bits, where $\delta > 0$ is any desired
constant.  Later, we show how to decrease the space usage to $n
\lg(U/n) + O(n \lg^\delta (U/n))$ bits.  The data structure consists
of the sorted list of points plus an indexing data structure,
namely, the weak prefix search data structure of~\cite[Theorem 5,
  second part]{weakprefix}. A weak prefix query on a set of points in $[U]$
is specified by a bit string $p$ of length at most $\log U$ and returns the
interval of ranks of the input points that have prefix $p$ (when
written in binary). If no such points exist the answer is arbitrary.
Given a query range $[a ; b]$ we compute the longest common prefix $p$
of the bit representations of $a$ and $b$.  This is possible in $O(1)$
time using a most significant bit computation.  Observe as
in~\cite{alstrup1d} that $h(S)\cap [a;b]$ is
non-empty if and only if at least one of the following holds:
\begin{itemize}
	\item A largest point in $h(S)$ prefixed by $p\circ 0$ exists, and is not smaller than $a$, or
	\item A smallest point in $h(S)$ prefixed by $p\circ 1$ exists, and is not greater than $b$.
\end{itemize}
We can determine if each of these holds by a weak prefix query, by
considering the points in the sorted list at the position of the
maximum and minimum returned ranks.  If there are no points with
prefix $p\circ 0$ or $p\circ 1$ the range returned by the weak prefix
search is arbitrary, but this is no problem since we can always check
points for inclusion in $[a;b]$.  The space usage for the weak
prefix search data structure, in the case of constant query time, can
be made $O(n \lg^\delta (U))$ bits for any constant $\delta>0$.

In order to reduce the space usage, we use a standard trick and split
the universe $[U]$ into $n$ subranges $s_1,\ldots,s_n$ of size $U/n$.
We need the well known succinct rank and select data structures that
stores a bit array of size $n$ using $n+o(n)$ bits of space and
supports rank and select queries in constant time \cite{succincter}.
We construct a bit array of size $n$ that has a one at position $i$ if
there is an input point in the range $s_i$ and zero otherwise, and
store it in a rank/select data structure $D_1$.
We store the data structure from above for each non-empty range $s_i$ (with
universe size $U/n$) using exactly $n_i (\log(U/n) +
\alpha(\log^\delta (U/n)))$ space where $n_i$ is the number of points in
$s_i$ and $\alpha>0$ is a constant that depends on the data structure.
The data structures are stored consecutively in an array $A_{ds}$.  To
locate the data structure for any range $s_i$, all we need is to count
the number of points in the ranges $s_j$ for $j<i$ and scale that
number accordingly.
For this purpose we store another array of $2n$ bits that for each
non-empty range $s_i$ stores a one followed by $n_i$ zeros in a
rank/select data structure $D_2$. For a given range $s_i$ we can
compute the starting position for the corresponding data structure in
$A_{ds}$ as follows. We compute the number of non-empty ranges $s_j$
with $j<i$ using a rank query for $i$ in $D_1$. Then we do a select
query in $D_2$ for the returned rank and subtract the queried rank
from the result to get the number of points in $s_j$ with
$j<i$. Finally we scale this number with $(\log(U/n) +
\alpha(\log^\delta (U/n)))$.
The total space usage becomes $n \log (U/n) + O(n \log^{\delta}
(U/n))$ bits.

A query range is naturally split into at most three parts, a part
consisting of a sequence of ranges, and maximally two parts that do
not cover an entire range. If we find a point in any of them we
report ``not empty'' and ``empty'' otherwise.  The part spanning
entire ranges is answered by computing the rank difference (number of
input points) between the endpoint in the rightmost range and the
starting index in the leftmost range spanned by the query with the
rank/select data structure. 
The non-spanning parts of the query are answered by the data structures
stored for the corresponding subranges (which we locate in $A_s$ as
described above). The query takes constant time, since answering each
part takes constant time.

\paragraph{Extension to Range Reporting.}
We note that the above data structure supports reporting all $k$
points inside a query interval in $O(k)$ time: Observe first that when
the above data structure returns ``not empty'', it actually finds a
point $p$ inside the query interval as well as $p$'s location in the sorted list
of all points in $p$'s subrange. We scan these points in
both the left and right direction, starting at $p$, and stop when a
point outside the query range is encountered. If all points preceeding
(or following) $p$ in the subrange are reported, we can find the next
subrange to report from using the rank and select data structure $D_1$ on
the non-empty subranges (we know the rank of the current subrange). We
conclude that we spend $O(1)$ time per reported point and thus $O(k)$
time in total.

\bibliographystyle{abbrv}
\bibliography{biblio.bib}

\end{document}